\documentclass[aps,prl,reprint,amsmath,amssymb,superscriptaddress,longbibliography]{revtex4-2}

\usepackage[T1]{fontenc}
\usepackage[utf8]{inputenc}
\usepackage{mathtools}
\usepackage{dsfont}
\usepackage{xcolor}
\definecolor{linkblue}{RGB}{30,72,110}
\usepackage[colorlinks=true,linkcolor=blue,citecolor=blue,urlcolor=red]{hyperref}

\newcommand{\Tr}{\operatorname{Tr}}
\newcommand{\supp}{\operatorname{supp}}
\newcommand{\sgn}{\operatorname{sgn}}
\newcommand{\1}{\mathds 1}
\newcommand{\Sone}{\mathcal S_1}
\newcommand{\Kcal}{\mathcal K}
\newcommand{\Bcal}{\mathcal B}
\newcommand{\Hcal}{\mathcal H}
\newcommand{\Dlog}{\mathrm D\!\log}
\newcommand{\hbin}{h_2}
\newcommand{\eps}{\varepsilon}
\newcommand{\norm}[1]{\lVert #1\rVert}
\newcommand{\tno}[1]{\lVert #1\rVert_1}
\newcommand{\posproj}[1]{\1_{(0,\infty)}\!\left(#1\right)}
\newcommand{\ket}[1]{\lvert #1\rangle}

\newcommand{\proj}[1]{\lvert #1\rangle\!\langle #1\rvert}

\newtheorem{theorem}{Theorem}
\newtheorem{lemma}{Lemma}
\newtheorem{proposition}{Proposition}
\newenvironment{proof}[1][Proof]{\par\noindent\textit{#1.}\ }{\hfill$\square$\par\smallskip}

\begin{document}

\title{Sharp Quantum Entropy Mixing Rates and the Operator Layer Cake Theorem}

\author{Alexander Stottmeister}
\affiliation{Institut f\"ur Theoretische Physik, Leibniz Universit\"at Hannover, Appelstra{\ss}e 2, 30167 Hannover, Germany}

\begin{abstract}
At what rate does the von Neumann entropy of an ensemble of quantum states change under Hamiltonian evolution of its constituents? Bravyi proposed the small incremental mixing conjecture controlling the mixing rate of a binary ensemble $\{(1\!-\!p,\rho_1),(p,\rho_2)\}$ by $c\,\norm{H}\hbin(p)$ (with the binary entropy $\hbin$). The proof of Bravyi's conjecture was subsequently reduced to the matrix inequality $\tno{[A,\log B]}\!\leq\! c\,\hbin(p)$ for positive trace-class operators $A\!\leq\! B$ with $\Tr A\!=\!p$ and $\Tr B\!=\!1$ on separable Hilbert spaces by Mari\"en, Audenaert, Van Acoleyen, and Verstraete, and they conjectured the optimal constant to be $c\!=\!1$.
Here, I prove the latter conjecture using an exact integral representation of the commutator $[A,\log B]$ obtained from the operator layer cake theorem, due to Cheng and Liu. The result gives a sharp dimension-independent limit on how rapidly unitary evolution of one component can change the entropy of a binary quantum ensemble. It follows that mixing rates satisfy small incremental mixing with the optimal constant, and entangling rates of bipartite Hamiltonians are bounded by $(2\log d\!+\!1)\norm{H}$. Moreover, the result corrects a conjecture by Lieb and Vershynina for mixing rates of general ensembles.
\end{abstract}

\maketitle

\paragraph{Introduction.} The rate at which unitary dynamics generates or degrades entropy is an important dynamical quantity of quantum information theory, with applications ranging from entanglement generation by bipartite Hamiltonians \cite{dur_entanglement_2001,bennett_capacities_2003,bravyi_upper_2007} to lower bounds on decoherence times \cite{schlosshauer_decoherence_2005} and to the stability of the area law for the entanglement entropy within a gapped phase \cite{van_acoleyen_entanglement_2013,marien_entanglement_2016}.

In \cite{bravyi_upper_2007}, Bravyi studied two not obviously related problems:
For a binary ensemble $\mathcal E=\{(1-p,\rho_1),(p,\rho_2)\}$ with average state $\bar\rho=(1-p)\rho_1+p\rho_2$ whose constituent $\rho_2$ evolves under a Hamiltonian $H$, the \emph{mixing rate} is the instantaneous change $\Lambda(\mathcal E,H)$ of the von Neumann entropy of the average state. For a pure state $\ket\psi\in\Hcal_{aABb}$ evolving under a bipartite interaction $H=H_{AB}$, the \emph{entangling rate} $\Gamma(H,\psi)$ is the instantaneous change of the entanglement entropy across the cut between $aA$ and $Bb$.

Bravyi conjectured \emph{small incremental mixing} (SIM),
\begin{align}\label{eq:SIM}
  |\Lambda(\mathcal E,H)| & \!\leq\! c\,\norm{H}\,\hbin(p), &  \hbin(p) & \!=\!-p\log p\!-\!q\log q,
\end{align}
with $q\!=\!1\!-\!p$ and a universal constant $c$, showing that SIM implies Kitaev's \emph{small incremental entangling} (SIE), 
\begin{align}\label{eq:SIE}
|\Gamma(H,\psi)| & \leq c'\norm{H}\log d, & d & =\min\{\dim A,\dim B\}.
\end{align}
Related bounds on mixing rates were obtained by Lieb and Vershynina, who conjectured the analogue of \eqref{eq:SIM} with $c=1$ for general ensembles, replacing the binary entropy with the Shannon entropy \cite{lieb_upper_2013}. The SIM conjecture was first proven, in finite dimension, by Van Acoleyen, Mari\"en, and Verstraete with $c=9$ \cite{van_acoleyen_entanglement_2013}, then by Audenaert with $c=2$ via the quantum skew divergence \cite{audenaert_quantum_2014}, and finally extended to arbitrary separable Hilbert spaces by Mari\"en, Audenaert, Van Acoleyen, and Verstraete \cite{marien_entanglement_2016} with $c=11$. The latter authors isolated an underlying trace-norm inequality \cite[Thm.~4]{marien_entanglement_2016}
\begin{align}\label{eq:tin}
\tno{[A,\log B]} & \!\leq\! c\,\hbin(p),\!\! & A & \!\leq\! B,\!\! & \Tr A & \!=\! p,\!\! & \Tr B & \!=\! 1, 
\end{align}
and conjectured that $c=1$ is optimal, based on numerical evidence.

In this Letter, I provide a concise and structurally simple proof of this conjecture. Its main ingredient is the \emph{operator layer cake theorem} of Cheng and Liu \cite{cheng_error_2026}, which is equivalent to Frenkel's integral formula for the quantum relative entropy \cite{frenkel_integral_2023, cheng_operator_2025}. The operator layer cake theorem allows for the conversion of $[A,\log B]$ into an integral of commutators $[A-B,P_t]$ for suitable spectral projections of the perturbation $A-t'(B-A)$ of $A$ by $B-A$.\footnote{$t' = t\tfrac{p}{q}$ with $p = \Tr A$ and $q=1-p$.} An elementary estimate of each commutator $[A-B,P_t]$ produces, after integration, the two terms in \eqref{eq:SIM} of the binary entropy $h_{2}(p)$ with $p = \Tr A$.

\paragraph{Main result.} Given a separable Hilbert space $\Hcal$, the trace-class, compact operators, and bounded operators are denoted by $\Sone(\Hcal)$, $\Kcal(\Hcal)$, and $\Bcal(\Hcal)$ respectively. $\tno{\ \cdot\ }$, $\norm{\ \cdot\ }_2$, and $\norm{\ \cdot\ }$ denote the trace, Hilbert--Schmidt, and operator norms.

\begin{theorem}\label{thm:main}
Let $A,B\in\Sone(\Hcal)$ be positive operators on a separable Hilbert space with
\begin{align}
  0 & \leq A\leq B, &  \Tr A & = p\in[0,1], &  \Tr B & =1.
\end{align}
Let $\Hcal_0=(\ker B)^\perp$ and let $\mathcal D_0$ be the linear span of an orthonormal eigenbasis of the positive compact operator $B|_{\Hcal_0}$. Then the sesquilinear form
\begin{align}\label{eq:weakform}
  \mathfrak c(x,y) & =\langle x,A\log B\,y\rangle-\langle \log B\,x,Ay\rangle, &  x,y & \in\mathcal D_0,
\end{align}
has a unique bounded representative $[A,\log B]_{\Hcal_0}$ on $\Hcal_0$; it is trace class and satisfies the \emph{optimal} inequality
\begin{align}\label{eq:main}
  \tno{[A,\log B]_{\Hcal_0}} & \leq \hbin(p).
\end{align}
\end{theorem}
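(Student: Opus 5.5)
The plan is to turn the commutator into an integral of commutators with spectral projections, bound each commutator in trace norm by elementary block-matrix estimates, and integrate. Throughout I use $[A,\log B]=[A-B,\log B]$, which holds because $B$ commutes with $\log B$, and I work with $X:=B-A\geq0$, so $\Tr X=q$.

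\emph{Step 1: reduction to finite dimensions.} Let $Q_n$ be the spectral projection of $B|_{\Hcal_0}$ onto its first $n$ eigenvectors. On $\mathcal D_0$ the form $\mathfrak c$ is determined by the matrix elements $\langle e_i,[A,\log B]e_j\rangle=(\log b_j-\log b_i)\langle e_i,Ae_j\rangle$. Hence $Q_n\mathfrak c\,Q_n$ is a finite-rank operator. I will show that $\sup_n\tno{Q_n[A,\log B]Q_n}\leq\hbin(p)$. Lower semicontinuity of the trace norm under weak limits, together with density of $\mathcal D_0$, then gives a unique bounded extension that is trace class with the same bound.

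The compressed matrices are not literally $[A_n,\log B_n]$ for $A_n\leq B_n$ with $\Tr B_n=1$. To handle this, I would instead apply the finite-dimensional inequality to $A_n=Q_nAQ_n$ and $B_n=Q_nBQ_n+(1-\Tr Q_nB)\,\omega$, where $\omega$ is a state on an added orthogonal dimension. This preserves $A_n\leq B_n$ and $\Tr A_n\to p$. I would then use continuity of $\hbin$.

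\emph{Step 2: layer cake representation.} For a single scalar, $\log b=\int_0^\infty\bigl(\1_{(0,\infty)}(b-s)-\1_{(0,\infty)}(1-s)\bigr)\,\frac{ds}{s}$. Applying this spectrally gives
\[
[A,\log B]=-\int_0^\infty [X,P_s]\,\frac{ds}{s},\qquad P_s=\posproj{B-s}.
\]
I expect this naive choice to give a constant larger than $1$. Following the paper's hint, I would therefore use the Cheng--Liu operator layer cake theorem with the projections
\[
P_t=\posproj{A-t'(B-A)},\qquad t'=t\,p/q,
\]
written in the form $[A,\log B]=\int_0^\infty [A-B,P_t]\,w(t)\,dt$ for an explicit weight $w$ (of order $1/(t(1+t))$). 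Deriving this formula exactly is the first key computation.

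\emph{Step 3: commutator estimate and integration.} For a projection $P$ and self-adjoint $Y$,
\[
\tno{[Y,P]}\leq 2\tno{PY(1-P)}.
\]
With $P=P_t$, the operator $A-t'X$ is $>0$ on $\operatorname{ran}P_t$ and $\leq0$ on $\operatorname{ran}(1-P_t)$. Positivity of $A$ and of $X$, applied blockwise, then bounds $\tno{P_t X(1-P_t)}$ by a quantity like $\min\{t'^{-1}\Tr P_tA,\ \Tr(1-P_t)X\}$, controlled by $\Tr P_tA\leq p$ and $\Tr(1-P_t)X\leq q$. Integrating the two regimes $t\lesssim1$ and $t\gtrsim1$ against $w$ should produce $-p\log p$ and $-q\log q$ respectively.

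\emph{Main obstacle and optimality.} The main obstacle is this step: matching the projection choice, the weight $w$, and the off-diagonal bound so that the constants come out as exactly $1$ and not $2$ or $\log$-factors. If the $\min$ bound is too lossy, I would first try the sharper estimate $\tno{PYP^\perp}\leq\sqrt{\Tr PYP\,\Tr P^\perp YP^\perp}$ for $Y\geq0$, applied to both $A$ and $X$. For the claim that the inequality is \emph{optimal}, I would exhibit qubit examples with $B$ nearly degenerate and $A$ tilted off its eigenbasis, and check that the ratio $\tno{[A,\log B]}/\hbin(p)$ approaches $1$ in a limit.
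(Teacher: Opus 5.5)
Your outline has the paper's architecture: a finite-dimensional reduction, a layer cake representation through $P_t=\{A>t'(B-A)\}$, a per-$t$ commutator bound, and integration. Step~1 is a valid variant of the paper's rescaling by $\beta_n$. One correction there: you obtain $\limsup_n\tno{Q_n[A,\log B]Q_n}\le\hbin(p)$, not a bound on $\sup_n$, because $\hbin$ is not monotone; lower semicontinuity only needs the former.

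\textbf{The genuine gap is Step~2.} The exact representation is the core of the proof, and you leave it underived. The weight you anticipate is also wrong: a weight of order $1/(t(1+t))$ against the only available bound near zero, $\tno{[X,P_t]}\le q$, gives a divergent integral at $t\to0$.

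The missing idea is to integrate the derivative of $\log$ along the segment from $X=B-A$ to $B$ in the direction $\rho=A/p$. First assume $X>0$ on $\supp B$; remove this afterwards via $X\to(1-\eta)X+\eta q\1/d_B$. Then
\[
\log B-\log X=\int_0^p ds\,\Dlog[X+s\rho](\rho),
\]
and you apply the layer cake formula to each derivative. The key observation is that all level sets collapse onto the single family $P_t$. For $u<1/s$ one has $\{\rho>u(X+s\rho)\}=\{(1-us)\rho>uX\}=\{A>t'X\}=P_t$ with $t=uq/(1-us)$. For $u\ge1/s$ the level set vanishes. Substitute $u=t/(q+st)$ and do the $s$-integral first, using $\int_0^p q(q+st)^{-2}ds=p/(q+pt)$. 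This gives $\log B-\log X=p\int_0^\infty\frac{dt}{q+pt}P_t$. Combined with $[X,\log X]=0$ and $[A,\log B]=-[X,\log B]$, it yields
\[
[A,\log B]=-p\int_0^\infty\frac{dt}{q+pt}\,[X,P_t].
\]
So $w(t)=p/(q+pt)$, which is bounded at $0$ and behaves like $t^{-1}$ at infinity.

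\textbf{Step~3 as first written loses the constant.} Since $\tno{[Y,P]}=2\tno{PY(1-P)}$ exactly, even granting $\tno{P_tX(1-P_t)}\le\min\{p/t',q\}=q\min\{1,t^{-1}\}$ you would get $\tno{[X,P_t]}\le 2q\min\{1,t^{-1}\}$, hence $2\hbin(p)$. Moreover, the min-type bound itself is false. Take $p=q=\tfrac12$, $X=\tfrac12\proj{x}$ with $x=(0.8,0.6)$, $A=\tfrac{1}{2(1+\delta)}(\proj{x}+\delta\proj{e_1})$ with $e_1=(1,0)$, and $t=\tfrac1{1+\delta}$. Then $P_t=\proj{e_1}$ and $\tno{P_tXP_t^\perp}=0.24>0.18=\Tr P_t^\perp X$.

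Your fallback is the right tool. For $Y\ge0$, $\tno{PYP^\perp}\le\sqrt{\Tr PY\,\Tr P^\perp Y}\le\tfrac12\Tr Y$, so $\tno{[X,P_t]}\le q$. Because $P_t$ is a spectral projection of $A-t'X$, you also have $[A,P_t]=t'[X,P_t]$, hence $\tno{[X,P_t]}\le p/t'=q/t$. This is exactly Lemma~\ref{lem:proj} specialised to a projection. Integrating gives $\int_0^\infty\frac{pq}{q+pt}\min\{1,t^{-1}\}\,dt=\hbin(p)$. Here $t\le1$ contributes $-q\log q$ and $t\ge1$ contributes $-p\log p$, the reverse of your assignment.

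\textbf{Optimality.} ``$B$ nearly degenerate'' is the wrong regime: then $\log B$ is almost scalar and the ratio tends to $0$. You need $p\to0$ with $B$ nearly pure, its small eigenvalue of order $p$. In the eigenbasis of $B$, $A\le B$ should be rank one with off-diagonal entry close to $p/2$. The paper's family does this: $B_\eps=\mathrm{diag}(1-\eps/2,\eps/2)$ and $A_\eps=B_\eps^{1/2}\proj{v_\eps}B_\eps^{1/2}$.
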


Since $0\!\leq\!\langle x,Ax\rangle\!\leq\!\langle x,Bx\rangle$, the operator $A$ satisfies $A\!=\!A|_{\Hcal_{0}}\!\oplus 0$ relative to the orthogonal decomposition $\Hcal = \Hcal_{0}\!\oplus\!\Hcal_{0}^{\perp}$. Thus, the canonical extension $[A,\log B]_{\mathrm{can}}\!:=\![A,\log B]_{\Hcal_{0}}\!\oplus 0$ leaves the trace norm unchanged. Clearly, $[A,\log B]_{\mathrm{can}}$ is simply the commutator on $\supp B$ in finite dimension. In the extremal cases $p\in\{0,1\}$, \eqref{eq:main} is tight because $\Tr A=0$ forces $A=0$, and $\Tr(B-A)=0$ forces $A=B$, and, thus, the commutator as well as $\hbin(p)$ vanish in both cases.

\paragraph{An exact commutator representation.} I first work on a finite-dimensional Hilbert space with $B>0$ (after restriction to $\supp B$) and $0<p<1$. By normalizing as
\begin{align}
  \rho & =\tfrac{1}{p}A, & \sigma & = \tfrac{1}{q}(B-A), &  B & = p\rho+q\sigma,
\end{align}
$\rho,\sigma$ are density matrices.

Let me temporarily assume that $\sigma>0$. For a self-adjoint operator $T$ over $\Hcal$, its positive and negative parts are denoted by $T_{\pm}\!:=\!\tfrac{1}{2}(|T| \pm T)$, and $\{T>S\}\!:=\!\posproj{T\!-\!S}$ denotes projection onto the positive part of a difference ($S$ being another self-adjoint operator over $\Hcal$). 
In the following, I will show the exact identity
\begin{align}\label{eq:log_com}
  [A,\log B] & = -pq\int_{0}^{\infty} dt\, \tfrac{1}{q+pt}[\sigma,P_{t}],
\end{align}
where $P_{t}:=\{\rho>t\sigma\}$.

The key ingredient is the operator layer cake representation of the operator logarithm's derivative due to Cheng and Liu \cite[Thm.~B.1]{cheng_error_2026}, which is equivalent to Frenkel's integral formula \cite[Thm.~6]{frenkel_integral_2023} as shown in \cite[Prop.~1]{cheng_operator_2025}, and has been generalized to separable Hilbert spaces recently \cite{kossmann_device-independent_2026}.
\begin{proposition}\label{prop:layercake}
For matrices $T\geq0$ and $S>0$,
\begin{align}\label{eq:layercake}
  \Dlog[S](T) & \!:=\!\tfrac{\textup{d}}{\textup{d}\epsilon}\log(S\!+\!\epsilon T)|_{\epsilon=0}\!=\!\int_{0}^{\infty} \!\!du\,\{T\!>\!uS\},
\end{align}
where $\{T>uS\}=0$ for $u\geq\norm{S^{-1/2}TS^{-1/2}}$, and $u\mapsto\{T>uS\}$ is Borel measurable in norm.
\end{proposition}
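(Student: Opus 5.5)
The plan is to prove \eqref{eq:layercake} \emph{weakly}: I will show that
\begin{align*}
\Tr\bigl[X\,\Dlog[S](T)\bigr]=\Tr\Bigl[X\int_0^\infty du\,\{T>uS\}\Bigr]
\end{align*}
for every self-adjoint matrix $X$. The idea is to realise both sides as directional derivatives, in the direction $X$, of one scalar functional of $S$. For the left-hand side, the integral representation $\Dlog[S](T)=\int_0^\infty ds\,(S+s)^{-1}T(S+s)^{-1}$ gives the symmetry $\Tr[X\,\Dlog[S](T)]=\Tr[T\,\Dlog[S](X)]$. Hence
\begin{align*}
\Tr\bigl[X\,\Dlog[S](T)\bigr]=\tfrac{d}{d\epsilon}\Tr\bigl[T\log(S+\epsilon X)\bigr]\big|_{\epsilon=0},
\end{align*}
so it suffices to show that $\Tr[T\log S]$ equals $-\Phi_T(S)$ up to an additive constant depending only on $T$, for a functional $\Phi_T$ whose $S$-derivative produces the layer cake integrand.

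The candidate functional comes from the scalar computation
\begin{align*}
\int_0^\infty \tfrac{du}{u}\bigl[(t-us)_+-(t-u)_+\bigr]=-t\log s .
\end{align*}
Accordingly, set
\begin{align*}
\Phi_T(S):=\int_0^\infty \tfrac{du}{u}\,\Tr\bigl[(T-uS)_+-(T-u)_+\bigr].
\end{align*}
The first key step is to show that $\Phi_T(S)=-\Tr[T\log S]+c(T)$ for all $S>0$. This is essentially Frenkel's integral formula for $D(T\Vert S)$ after the substitution $u=t/(1-t)$ and the identity $(Y)_-=(Y)_+-Y$. The linear terms combine into the $S$-independent subtraction, and the term $\Tr[T\log T]$ goes into $c(T)$. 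Convergence of the integral is unproblematic: near $u=0$ the integrand is $O(1)$ because the subtraction cancels the $\Tr T/u$ singularity, and for $u\ge\max\{\norm{S^{-1/2}TS^{-1/2}},\norm{T}\}$ both positive parts vanish.

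The second step is a Hellmann--Feynman argument. The function $Y\mapsto\Tr Y_+$ is convex and differentiable at every $Y$ with $0\notin\operatorname{spec}Y$, with derivative $\Tr[Z\{Y>0\}]$ in direction $Z$. For $Y=T-uS$ this gives
\begin{align*}
\tfrac{d}{d\epsilon}\Tr\bigl(T-u(S+\epsilon X)\bigr)_+\big|_{\epsilon=0}=-u\Tr\bigl[X\{T>uS\}\bigr]
\end{align*}
for all $u$ outside the finite set where $T-uS$ is singular. Since $S>0$, $\det(T-uS)$ is a nonzero polynomial in $u$, so this set is finite. Differentiating under the integral then yields
\begin{align*}
\tfrac{d}{d\epsilon}\Phi_T(S+\epsilon X)\big|_{\epsilon=0}=-\int_0^\infty du\,\Tr\bigl[X\{T>uS\}\bigr].
\end{align*}
Comparing with $-\tfrac{d}{d\epsilon}\Tr[T\log(S+\epsilon X)]$ gives the weak identity. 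Because $X$ is arbitrary, the operator identity follows.

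The side claims are straightforward. First, $\{T>uS\}=0$ for $u\ge\norm{S^{-1/2}TS^{-1/2}}$ because $T-uS=S^{1/2}(S^{-1/2}TS^{-1/2}-u)S^{1/2}\le0$ there. Second, $u\mapsto\{T>uS\}$ is piecewise constant with finitely many jumps, so it is Borel measurable in norm.

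I expect the main obstacle to be justifying the differentiation under the integral. The difference quotients of $\Tr(T-u(S+\epsilon X))_+$ are bounded by $u\norm{X}_1$, since $Y\mapsto\Tr Y_+$ is $1$-Lipschitz in trace norm. The $u$-range is uniformly bounded for small $\epsilon$. Together these allow dominated convergence. The pointwise derivative exists for all but finitely many $u$, which is exactly what dominated convergence needs. The first step, identifying $\Phi_T$ with Frenkel's formula, is a bookkeeping change of variables, but the regularising subtraction has to be tracked carefully.
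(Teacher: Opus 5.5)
Your argument is correct in substance. It also does something the paper does not. The paper gives no proof of Proposition~\ref{prop:layercake}: it imports the result from Cheng--Liu \cite[Thm.~B.1]{cheng_error_2026} and only remarks that it is equivalent to Frenkel's formula \cite[Prop.~1]{cheng_operator_2025}. What you have written is, in effect, the Frenkel $\Rightarrow$ layer-cake direction of that equivalence, made explicit.

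Your steps are as follows. You use the Hilbert--Schmidt symmetry of $\Dlog[S]$, which comes from the same resolvent integral the paper uses in Eq.~\eqref{eq:kernel_alt}. This trades the operator-valued derivative in direction $T$ for scalar derivatives of $S\mapsto\Tr[T\log S]$ in arbitrary self-adjoint directions $X$. You then identify $\Tr[T\log S]$ with the hockey-stick functional $-\Phi_T(S)$. Finally you differentiate under the integral, using the trace-norm $1$-Lipschitz bound on $Y\mapsto\Tr Y_+$ and the finiteness of the set where $T-uS$ is singular.

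The gain is transparency: the projections $\{T>uS\}$ appear directly as gradients of $\Tr(T-uS)_+$. The argument is not more elementary than the citation, because Frenkel's formula remains a black box. The nontrivial input is only moved from an operator-valued statement to a scalar one.

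Two bookkeeping remarks on the first step:
\begin{itemize}
\item With your reference term $\Tr(T-u)_+$ one gets exactly $\Phi_T(S)=-\Tr[T\log S]$, i.e.\ $c(T)=0$. The $\Tr[T\log T]$ contributions cancel between the two terms, as the commuting case confirms.
\item In Frenkel's parametrization $(1-t)T+tS$, the substitution that produces $T-uS$ with $u>0$ is $u=t/(t-1)$, not $t/(1-t)$. The two coefficients must have opposite signs.
\end{itemize}

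One side claim is misjustified. When $T$ and $S$ do not commute, $u\mapsto\{T>uS\}$ is \emph{not} piecewise constant. Between consecutive roots of $\det(T-uS)$ the number of positive eigenvalues is fixed, but the positive spectral subspace generally rotates with $u$. For instance, take $T=\mathrm{diag}(2,1)$ and let $S$ have unit diagonal and off-diagonal entries $\tfrac12$. The positive eigenvector of $T-uS$ differs at $u=1$ and at $u=2$, although both values lie between the same pair of roots.

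What is true is that on each such interval the projection is norm continuous. This follows from the Riesz-projection formula with a contour enclosing the positive part of the spectrum. The map is therefore continuous off a finite set, hence Borel measurable in norm. Your main argument never uses piecewise constancy. It only needs $\epsilon\mapsto\Tr\bigl(T-u(S+\epsilon X)\bigr)_+$ to be differentiable at $\epsilon=0$ for all but finitely many $u$, so nothing else is affected.
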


\begin{proof}[Proof of Eq.~\eqref{eq:log_com}]
Fix $s\in(0,p]$ and set $S_s=q\sigma+s\rho>0$, $T = \rho$. For $u\geq\tfrac{1}{s}$ one has $T-uS_s=(1-us)\rho-uq\sigma\leq0$ and, thus, the projection in \eqref{eq:layercake} vanishes. For $0\leq u<\tfrac{1}{s}$,
\begin{align}
  T-uS_s & = (1-us)\,(\rho-t\sigma), &  t & = \tfrac{uq}{1-us},
\end{align}
with $1-us>0$, so that $\{T>uS_s\}=\{\rho>t\sigma\}=P_{t}$. Substituting $u=\tfrac{t}{q+st}$, $du=\tfrac{q}{(q+st)^{2}}dt$, in \eqref{eq:layercake} (restricted to the integration range $[0,\tfrac{1}{s})$), results in 
\begin{align}\label{eq:kernel}
  \Dlog[q\sigma+s\rho](\rho) & = \int_{0}^{\infty} dt\,\tfrac{q}{(q+st)^2}\,P_t.
\end{align}
This representation remains valid at $s=0$ by $t=qu$.

Applying the fundamental theorem of calculus to $s\mapsto\log(q\sigma+s\rho)$ gives
\begin{align}\label{eq:log_int}
  \log B-\log(q\sigma) & =\int_{0}^{p} ds\, \Dlog[q\sigma+s\rho](\rho).
\end{align}
Combining \eqref{eq:kernel} \& \eqref{eq:log_int}, and interchanging the integrals leads to
\begin{align}\label{eq:log_rep}
  \log B-\log(q\sigma) & = p\int_{0}^{\infty} dt\, \tfrac{1}{q+pt}P_t .
\end{align}
because $\int_0^p q(q+st)^{-2}ds=p/(q+pt)$.

Finally, $[\sigma,\log(q\sigma)]\!=\!0$ and $p[\rho,\log B]\!+\!q[\sigma,\log B]\!=\![B,\log B]\!=\!0$ with $A\!=\!p\rho$, yield the desired representation \eqref{eq:log_com}.
\end{proof}

\paragraph{The entropy bound.} The integrand of \eqref{eq:log_com} admits an elementary estimate.

\begin{lemma}\label{lem:proj}
Let $\tau\in\Sone(\Hcal)$ be positive with $\Tr\tau\!=\!1$ and let $T\in\Bcal(\Hcal)$ be positive. Then,
\begin{align}\label{eq:trace_bound}
  \tno{[\tau,T]} & \leq 2\sqrt{\langle T^{2}\rangle_{\tau}-\langle T\rangle_{\tau}^{2}} \leq\norm{T},
\end{align}
where $\langle\ \cdot\ \rangle_{\tau} = \Tr(\tau\ \cdot\ )$.
\end{lemma}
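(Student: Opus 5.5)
Throughout, write $\langle\ \cdot\ \rangle=\langle\ \cdot\ \rangle_\tau$. The plan rests on two observations: the commutator is unchanged when $T$ is replaced by the centred operator $X:=T-\langle T\rangle\1$, and a commutator with a density operator splits into two "half" terms whose trace norms are controlled by Cauchy--Schwarz for Hilbert--Schmidt operators. Since $[\tau,\1]=0$, we have $[\tau,T]=[\tau,X]$ with $X$ bounded and self-adjoint. Then
\begin{align*}
[\tau,X]=\tau^{1/2}\bigl(\tau^{1/2}X\bigr)-\bigl(X\tau^{1/2}\bigr)\tau^{1/2}.
\end{align*}
Each factor $\tau^{1/2}$ is Hilbert--Schmidt with $\norm{\tau^{1/2}}_2^2=\Tr\tau=1$. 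Moreover $\norm{X\tau^{1/2}}_2^2=\Tr(\tau X^2)=\langle T^2\rangle-\langle T\rangle^2$, and likewise for $\tau^{1/2}X$. Applying Hölder's inequality $\tno{YZ}\le\norm{Y}_2\norm{Z}_2$ to each term and using the triangle inequality gives $\tno{[\tau,T]}\le 2\sqrt{\mathrm{Var}_\tau(T)}$. All products here are trace class, so the argument is valid on separable $\Hcal$ with no finite-dimensional reduction.

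The second inequality is a Bhatia--Davis/Popoviciu bound. Since $0\le T\le\norm{T}\1$, the functional calculus gives $T^2\le\norm{T}\,T$. Hence
\begin{align*}
\langle T^2\rangle-\langle T\rangle^2\le \langle T\rangle\bigl(\norm{T}-\langle T\rangle\bigr)\le \tfrac14\norm{T}^2,
\end{align*}
because $\langle T\rangle\in[0,\norm{T}]$ ($\tau$ is a state). Taking square roots and multiplying by $2$ yields $2\sqrt{\mathrm{Var}_\tau(T)}\le\norm{T}$.

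The step that needs care is the first one. A naive expansion $\tno{\tau X}+\tno{X\tau}$ would only give the bound $2\norm{X}$. Distributing $\tau$ symmetrically as $\tau^{1/2}\cdot\tau^{1/2}$ is what turns the bound into one involving the variance, so that step is the crux. The rest is routine. Positivity of $T$ is used only in the second inequality, where it forces the spectrum into $[0,\norm{T}]$; this is the sharp constant needed later with $T=P_t$ a projection, giving $\tno{[\sigma,P_t]}\le 2\sqrt{\langle P_t\rangle_\sigma(1-\langle P_t\rangle_\sigma)}$.
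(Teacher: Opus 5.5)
Your proof is correct. Its second half ($T^2\le\norm{T}\,T$, then $\langle T\rangle_\tau(\norm{T}-\langle T\rangle_\tau)\le\tfrac14\norm{T}^2$) is the same Bhatia--Davis-type step the paper uses; the paper derives $T^2\le\norm{T}\,T$ by conjugating $\norm{T}\1-T\ge0$ with $T^{1/2}$.

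For the first inequality you take a genuinely different route. The paper argues by duality:
\begin{itemize}
\item Since $[\tau,T]$ is skew-adjoint, $\tno{[\tau,T]}=\sup_{-\1\le S\le\1}|\Tr([\tau,T]S)|$.
\item Cyclicity rewrites this as $|\Tr(\tau[T,S])|=2\,|\Im\langle T,S\rangle_\tau|$ for the positive form $\langle T,S\rangle_\tau=\Tr(\tau T^*S)$.
\item Cauchy--Schwarz for this form, together with $\langle S^2\rangle_\tau\le1$, gives $2\langle T^2\rangle_\tau^{1/2}$.
\item Only then is $T$ centred.
\end{itemize}
This is a Robertson-type uncertainty argument.

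You instead centre first, split $\tau=\tau^{1/2}\tau^{1/2}$, and apply H\"older's inequality $\tno{YZ}\le\norm{Y}_2\norm{Z}_2$ to each half of the commutator. Your route avoids the variational characterization of the trace norm and the supremum over contractions entirely. It is a bit shorter and purely algebraic. It also does not need self-adjointness of $X$: for general bounded $X$ it gives $\langle XX^*\rangle_\tau^{1/2}+\langle X^*X\rangle_\tau^{1/2}$. The paper's version makes explicit that only the imaginary part of the $\tau$-weighted covariance between $T$ and the optimal test operator contributes, which ties the estimate to the uncertainty principle.

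For the main theorem both arguments yield the identical bound, in particular $\tno{[\sigma,P_t]}\le1$. Both are valid verbatim on separable $\Hcal$.
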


Applying Lemma~\ref{lem:proj} to $\sigma$ and $P_{t}$, gives $\tno{[\sigma,P_t]}\leq1$. But, because $P_t$ is a spectral projection of $\rho-t\sigma$, it follows that $[\rho-t\sigma,P_t]=0$, i.e.\ $[\rho,P_t]=t\,[\sigma,P_t]$ for \emph{every} $t\geq0$. Now, applying Lemma~\ref{lem:proj} to $\rho$, yields $t\,\tno{[\sigma,P_t]}\leq1$. Combining both estimates implies
\begin{align}\label{eq:minbound}
  \tno{[\sigma,P_t]} & \leq\min\{1,t^{-1}\}, & t & >0.
\end{align}
Thus, applying the triangle inequality and \eqref{eq:minbound} to \eqref{eq:log_com}, proves \eqref{eq:main} under the assumption $\sigma>0$:
\begin{align}
  \tno{[A,\log B]}
  &\leq pq\int_{0}^{1}dt\, \tfrac{1}{q+pt}+pq\int_{1}^{\infty} dt\, \tfrac{1}{t(q+pt)}\notag\\
  &=-q\log q-p\log p=\hbin(p).\label{eq:entropyintegral}
\end{align}
If $\sigma$ is singular (on $\Hcal_{0}$), replace it by $\sigma_\eta=(1-\eta)\sigma+\eta\tfrac{1}{d_{B}}\1$ with $d_{B}=\dim\supp B$ and $B_\eta=p\rho+q\sigma_\eta$; then $A\leq B_\eta$, $\Tr B_\eta=1$, and $B_\eta\to B>0$ in norm with spectra in a fixed compact subset of $(0,\infty)$, so $\log B_\eta\to\log B$ in norm and \eqref{eq:main} passes to the limit. I defer the proof of the separable infinite-dimensional case ($\dim\Hcal_0=\infty$) to the Appendix.

The proof elucidates the mechanism behind the appearance of the binary entropy $\hbin(p)$. The only inequalities in the entire argument are the two applications of Lemma~\ref{lem:proj} and the triangle inequality under the integral representation \eqref{eq:log_com}. The region $t\leq1$ of the likelihood-ratio parameter contributes exactly $-q\log q$; the region $t\geq1$, where the spectral relation $[\rho,P_t]=t[\sigma,P_t]$ supplies the extra factor $t^{-1}$, contributes exactly $-p\log p$.

\paragraph{Optimality.} That the constant $c=1$ cannot be improved follows by considering a specific family of trace-class operators. For $0<\eps<2/3$ let
\begin{align}\label{eq:family} \notag
  B_{\eps} & =\begin{pmatrix}1-\frac{\eps}{2}&0\\[1mm]0&\frac{\eps}{2}\end{pmatrix}, & \ket{v_{\eps}} & =\begin{pmatrix}\sqrt{s_{\eps}}\\[1mm]\sqrt{1-s_{\eps}}\end{pmatrix}, \\
   A_{\eps} & = B_{\eps}^{\frac{1}{2}}\proj{v_{\eps}}B_{\eps}^{\frac{1}{2}},
\end{align}
with $s_\eps=\tfrac{\eps}{2(1-\eps)}\in(0,1)$. Then $0\leq A_{\eps}\!\leq\!B_{\eps}$ and $\Tr A_{\eps}=\langle v_{\eps},B_{\eps} v_{\eps}\rangle=\eps$. Since $\log B_{\eps}$ is diagonal, the commutator is off-diagonal with trace norm $2|(A_{\eps})_{12}|\log\tfrac{2-{\eps}}{\eps}$, explicitly
\begin{align}\label{eq:sharpnorm}
  \tno{[A_{\eps},\log B_{\eps}]} & 
  =\tfrac{\eps\sqrt{(2-\eps)(2-3\eps)}}{2(1-\eps)}\,
  \log\tfrac{2-\eps}{\eps}.
\end{align}
For small $\eps>0$, both \eqref{eq:sharpnorm} and $\hbin(\eps)$ are asymptotic to $\eps\log(1/\eps)$, so
\begin{align}\label{eq:ratio}
  \lim_{\eps\to0}\tfrac{\tno{[A_{\eps},\log B_{\eps}]}}{\hbin(\eps)} & =1,
\end{align}
and no constant below $1$ can hold universally. Notably, $\rho_{\eps}\!=\!\tfrac{1}{\eps}A_{\eps}$ and $\sigma_{\eps}\!=\!\tfrac{1}{1-\eps}(B_{\eps}-A_{\eps})$ are both qubit pure states.

\paragraph{Mixing and entangling rates.} Throughout this paragraph, all Hilbert spaces are finite dimensional. For the binary ensemble $\mathcal E$ from the introduction, with $\bar\rho(t)=(1-p)\rho_{1}+p\,e^{itH}\rho_{2}e^{-itH}$, the mixing rate is
\begin{align}\label{eq:rate}
  \Lambda(\mathcal E,H) & =\tfrac{\textup{d}}{\textup{d}t}S(\bar\rho(t))|_{t=0}
  =-i\Tr\!\big(H[A,\log\bar\rho]\big),
\end{align}
with $A=p\rho_{2}$ \cite[Eq.~(25)]{marien_entanglement_2016} (see also \cite{bravyi_upper_2007}), where $\log\bar\rho$ is taken on $\supp\bar\rho$, which contains $\supp\rho_{2}$ because $p\rho_{2}\leq\bar\rho$.

\begin{theorem}\label{thm:rates}
In finite dimension, the following hold:
\begin{itemize}
\item[(i)] for every binary ensemble,
\begin{align}
|\Lambda(\mathcal E,H)| & \!\leq\! \norm{H}\,\hbin(p)
\end{align}
and the right-hand side cannot be improved to $c\,\norm{H}\hbin(p)$ with $c<1$,
\item[(ii)] for every ensemble $\{(p_{i},\rho_{i})\}_{i=1}^n$ with constituents evolving under Hamiltonians $\norm{H_{i}}\leq1$,
\begin{align}\label{eq:general}
  |\Lambda|\!\leq\!\sum_{i=1}^{n}\hbin(p_{i}) & \!\leq\! \min\{h(\{p_{i}\})+1, 2\, h(\{p_{i}\})\},
\end{align}
where $h$ is the Shannon entropy. Moreover, the universal prefactor $2$ of $h$ is optimal,
\item[(iii)] for every bipartite pure state $\ket\psi\in\Hcal_{aABb}$ and interaction $H=H_{AB}$,
\begin{align}\label{eq:SIE_2}
  |\Gamma(H,\psi)|\!\leq\! d^2\hbin(\tfrac{1}{d^{2}})\,\norm{H} & \!\leq\!(2\log d\!+\!1)\norm{H},
\end{align}
where $d=\min\{\dim A,\dim B\}$.
\end{itemize}
\end{theorem}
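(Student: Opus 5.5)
For part (i) I would start from the formula \eqref{eq:rate}, $\Lambda(\mathcal E,H)=-i\Tr(H[A,\log\bar\rho])$ with $A=p\rho_2$ and $B=\bar\rho$. These satisfy $0\le A\le B$, $\Tr A=p$, $\Tr B=1$, and $\log\bar\rho$ is taken on $\supp\bar\rho$. Hölder's inequality then gives $|\Lambda|\le\norm{H}\,\tno{[A,\log\bar\rho]}$, and Theorem~\ref{thm:main} bounds the trace norm by $\hbin(p)$.

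For optimality I would reuse the family \eqref{eq:family}. Set $p=\eps$, $\rho_2=A_\eps/\eps$ and $\rho_1=\sigma_\eps=(B_\eps-A_\eps)/(1-\eps)$, so that $\bar\rho=B_\eps$. The commutator $C=[A_\eps,\log B_\eps]$ is anti-Hermitian, so $C=iK$ with $K$ Hermitian. Choosing $H=\sgn(K)$ gives $\norm{H}=1$ and $|\Lambda|=|\Tr(HK)|=\tno{K}=\tno{C}$. By \eqref{eq:ratio} the ratio $|\Lambda|/\hbin(\eps)$ then tends to $1$.

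For part (ii) I would use linearity of the derivative: $\Lambda=-i\sum_i\Tr(H_i[p_i\rho_i,\log\bar\rho])$. Each summand is controlled by Theorem~\ref{thm:main} with $A=p_i\rho_i\le B=\bar\rho$, which gives $|\Lambda|\le\sum_i\hbin(p_i)$. The remaining claims are scalar inequalities.

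\emph{First inequality.} Write $\sum_i\hbin(p_i)=h+\sum_i f(p_i)$ with $f(x)=-(1-x)\log(1-x)$. Since $f(0)=0$ and $f'(x)=\log(1-x)+1\le1$, we have $f(x)\le x$, so $\sum_i f(p_i)\le1$.

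\emph{Second inequality.} Here I need $f(p_i)\le -p_i\log p_i$ whenever $p_i\le\tfrac12$, which follows from $\hbin$-type symmetry arguments on $[0,\tfrac12]$. At most one weight, say $p_1$, can exceed $\tfrac12$. For it, $f(p_1)=-q\log q$ with $q=\sum_{j\ne1}p_j$, and subadditivity of $x\mapsto -x\log x$ bounds this by $\sum_{j\ne1}(-p_j\log p_j)$. Together these give $\sum_i\hbin(p_i)\le 2h$.

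\emph{Optimality of the factor $2$.} Take $n=2$. Since $[\bar\rho,\log\bar\rho]=0$, the rate becomes $\Lambda=-i\Tr((H_2-H_1)[A,\log\bar\rho])$. Choosing $H_2=\sgn K=-H_1$ with the optimal family gives $|\Lambda|=2\tno{K}\approx2\hbin(\eps)=2h$.

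For part (iii) I would invoke Bravyi's reduction of entangling rates to mixing rates \cite{bravyi_upper_2007}, in the form used in \cite{marien_entanglement_2016}. Expanding $H_{AB}$ and introducing maximally entangled ancillas, one expresses $\Gamma(H,\psi)$ as $d^2$ times the mixing rate of a binary ensemble. In that ensemble the constituent of weight $p=1/d^2$ is evolved by a Hamiltonian of norm at most $\norm{H}$. Part (i) then yields $|\Gamma|\le d^2\hbin(1/d^2)\norm{H}$. The final bound is elementary:
\begin{align}
d^2\hbin(\tfrac1{d^2})=2\log d+(d^2-1)\log\tfrac{d^2}{d^2-1}\le2\log d+1,
\end{align}
using $\log(1+x)\le x$.

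I expect the main obstacle to be the reduction in (iii). The difficulty is checking that Bravyi's construction produces exactly the weight $1/d^2$ and prefactor $d^2$, with no loss in $\norm{H}$, so that the sharp constant of (i) carries over without degradation.
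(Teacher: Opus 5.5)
\textbf{Gap in the second inequality of (ii).} Your treatment of (i), of the first inequality and the optimality claim in (ii), and of (iii) follows the paper's route. The genuine gap is the second inequality in (ii), $\sum_i\hbin(p_i)\le 2h$.

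Write $\phi(x)=-x\log x$. You need $\sum_i f(p_i)\le\sum_i\phi(p_i)=h$. In the case $p_1>\tfrac12$ you use two bounds:
\begin{itemize}
\item $f(p_1)=\phi(q)\le\sum_{j\ne1}\phi(p_j)$;
\item $f(p_j)\le\phi(p_j)$ for $j\ne1$.
\end{itemize}
Adding them gives only $\sum_i f(p_i)\le 2\sum_{j\ne1}\phi(p_j)$. This spends the entropy of the small weights twice and never uses $\phi(p_1)$. It is $\le h$ only if $\sum_{j\ne1}\phi(p_j)\le\phi(p_1)$, which is false in general.

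Already $n=2$ shows this. There $\hbin(p_1)+\hbin(p_2)=2h$ holds with equality, so no step may lose anything. For $(p_1,p_2)=(0.6,0.4)$ your chain yields $\sum_i f(p_i)\le 2\phi(0.4)\approx0.733$, whereas $h\approx0.673$. The loss sits in $f(p_2)=\phi(0.6)<\phi(0.4)$.

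Swapping the allocation does not help for $n\ge3$. Since $f$ is strictly concave with $f(0)=0$, it is subadditive, so $\sum_{j\ne1}f(p_j)\ge f(q)=\phi(p_1)$, with strict inequality once two small weights are nonzero. So the small weights cannot be charged to $\phi(p_1)$ either; a new idea is needed.

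\textbf{How to fix it.} The paper avoids any case split. It uses the Weierstrass product inequality $\prod_{i\ne j}(1-p_i)\ge 1-\sum_{i\ne j}p_i=p_j$ and exchanges the double sum:
\[
-\sum_i(1-p_i)\log(1-p_i)=-\sum_j p_j\log\prod_{i\ne j}(1-p_i)\le-\sum_j p_j\log p_j=h .
\]

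Alternatively, your case split can be repaired by applying subadditivity to $g(x)=\phi(x)-\phi(1-x)$ rather than to $\phi$. The target is equivalent to $\sum_i g(p_i)\ge0$.
\begin{itemize}
\item $g(0)=g(\tfrac12)=0$ and $g''(x)=\tfrac{1}{1-x}-\tfrac1x<0$ on $(0,\tfrac12)$.
\item Hence $g\ge0$ on $[0,\tfrac12]$; this is the precise content of your claim $f\le\phi$ there.
\item Also $g$ is concave with $g(0)=0$, hence subadditive on $[0,\tfrac12]$.
\item With $q=\sum_{j\ne1}p_j<\tfrac12$ and $g(p_1)=-g(q)$, subadditivity gives $\sum_{j\ne1}g(p_j)\ge g(q)$, which is exactly what is needed.
\end{itemize}

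\textbf{On (iii).} The step you flag as the main obstacle is short once the pair is written down. Assume $\dim B\le\dim A$ and take $(A',B')=(\tfrac1{d^2}\rho_{aAB},\,\rho_{aA}\otimes\tfrac1d\1_B)$. Then $A'\le B'$ is the Pimsner--Popa-type inequality $\rho_{aAB}\le d\,\rho_{aA}\otimes\1_B$ (Bravyi's Lemma~1). Moreover $\log B'$ differs from $\log\rho_{aA}\otimes\1_B$ by a constant, so $|\Gamma|=d^2\,|\Tr(H[A',\log B'])|$. H\"older and Theorem~\ref{thm:main} then apply with no loss in $\norm{H}$.
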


\begin{proof}
The inequality of Item (i) follows from \eqref{eq:rate}, $|\Tr(HC)|\!\leq\!\norm{H}\tno{C}$, and Theorem~\ref{thm:main}. Optimality follows because $C=-i[A,\log\bar\rho]$ is Hermitian, so $H=\sgn(C)$ attains $|\Lambda|\!=\!\tno{[A,\log\bar\rho]}$, and every admissible pair $(A,B)$ arises from an ensemble via $\rho_{2}\!=\!\tfrac{1}{p}A$, $\rho_{1}\!=\!\tfrac{1}{q}(B-A)$ (denoted $\rho$, $\sigma$ above), and the family \eqref{eq:family} saturates the bound.

The first part of the bound in Item (ii) follows by applying (i) termwise in the rate identity $\Lambda=-i\sum_i\Tr(H_{i}[p_{i}\rho_{i},\log\bar\rho])$ for $\bar\rho(t) = \sum_{i}p_{i}e^{itH_{i}}\rho_{i}e^{-itH_{i}}$ \cite[Eq.~(13)]{marien_entanglement_2016}, since $p_{i}\rho_{i}\leq\bar\rho$, together with $-(1-x)\log(1-x)\leq x$ and $\sum_{i}p_{i}=1$. The second part follows from the Weierstraß product inequality $\prod_{j\neq i}(1\!-\!p_{j})\!\geq\!1\!-\!\sum_{j\neq i}p_{j}\!=\!p_{i}$ \cite[Eq.~3.2.27.(1)]{mitrinovic_analytic_1970}, hence $-\sum_{i}(1\!-\!p_{i})\log(1\!-\!p_{i})\!=\!-\sum_{i}\sum_{j\neq i}p_{j}\log(1\!-\!p_{i})\!=\!-\sum_{j}\sum_{i\neq j}p_{j}\log(1\!-\!p_{i})\!=\!-\sum_{j}p_{j}\log\prod_{i\neq j}(1\!-\!p_{i})\!\leq\!h(\{p_{i}\})$.
The optimality of the prefactor $2$ of $h$ follows by considering the family \eqref{eq:family} with $\rho_{2}\!=\!\rho_{\eps}$, $\rho_{1}\!=\!\sigma_{\eps}$, and $H_{1,\eps}\!=\!-H_{2,\eps}\!=\!\sgn(i[A_{\eps},\log B_{\eps}])$. This gives $|\Lambda_{\eps}|\!=\!2\tno{[A_{\eps},\log B_{\eps}]}$ and, thus, $\frac{|\Lambda_{\eps}|}{\hbin(\eps)}\to 2$.

Item (iii) is Bravyi's reduction \cite{bravyi_upper_2007} (see also \cite[Sec.~2.4]{marien_entanglement_2016}) with the SIM constant $c\!=\!1$. Specifically, assuming $\dim B\!\leq\!\dim A$, one chooses the pair $(\tfrac{1}{d^{2}}\rho_{aAB}, \rho_{aA}\!\otimes\!\tfrac{1}{d}\1_{B})$, which is admissible as a consequence of the type I analogue of the Pimsner--Popa inequality \cite[Prop.~2.1]{pimsner_entropy_1986} for the inclusion $\Bcal(\Hcal_{aA})\!\subset\!\Bcal(\Hcal_{aAB})$ together with the 
conditional expectation $E\!=\!\Tr_{B}\!\otimes\!\tfrac{1}{d}\1_{B}$ and index $d^{2}$ (see \cite[Lem.~1]{bravyi_upper_2007}).
\end{proof}

For all $d\!\geq\!2$, \eqref{eq:SIE_2} is an improvement over previously known SIE constants. In this regard, let me emphasize the following subtlety: the reduction in \cite[Eqs.~(34)--(38)]{marien_entanglement_2016} converts the \emph{one-sided} estimate $\tno{[A,\log B]}\!\leq\! c\,p\log\tfrac{1}{p}$ into the SIE constant $2c\log d$. But, the one-sided estimate with $c\!=\!1$ is \emph{false} for general admissible pairs $(A,B)$: for $\eps\!=\!\tfrac{1}{2}$ the family \eqref{eq:family} gives $\tno{[A_{\frac{1}{2}},\log B_{\frac{1}{2}}]}\!=\!\tfrac{\sqrt3}{4}\log3\!>\!\tfrac12\log2$.
The proof of Item (ii) improves the bound in \cite[Thm.~2]{marien_entanglement_2016} and shows that the conjectured bound in \cite[Conj.~3]{lieb_upper_2013} needs to be corrected by a factor of $2$, although \cite[Eq.~(8)]{lieb_upper_2013}, and the discussion surrounding it, suggest that the conjecture might have been stated erroneously.

\paragraph{Discussion and outlook.} Theorem~\ref{thm:main} settles the value of the optimal constant in Bravyi's SIM conjecture \cite{bravyi_upper_2007}. The crucial ingredient is \eqref{eq:log_com}, which represents the commutator $[A,\log B]$, using the operator layer cake theorem \cite{cheng_error_2026}, as a weighted integral of commutators between a density matrix and a related family of spectral projections. This representation may be of independent interest, e.g., for the commutator inequalities conjectured in \cite{audenaert_problems_2012} and for rate bounds beyond the von Neumann entropy \cite{vershynina_entanglement_2019}, as well as in the quantitative analysis of area-law stability \cite{van_acoleyen_entanglement_2013,marien_entanglement_2016}, where the constant of \eqref{eq:SIM} enters into the error bounds of quasi-adiabatic continuation.

The optimal constant for SIE remains open. Item (iii) of Theorem \ref{thm:rates} yields $2\log d\!+\!1$, while the conjectured value $2\log d$ \cite{marien_entanglement_2016,van_acoleyen_entanglement_2013} would follow from the one-sided estimate restricted to the special pairs $(\tfrac{1}{d^{2}}\rho_{AB},\rho_A\otimes\tfrac{1}{d}\1_B)$ produced by the reduction. Such a constrained inequality is not excluded by the counterexample above.

It is interesting to note that the proof is not intrinsically finite-dimensional. Moreover, since trace-class operators on separable Hilbert spaces correspond to normal linear functionals on type I von Neumann algebras, the results immediately ask for a generalization to arbitrary von Neumann algebras. In addition, Frenkel's integral formula has recently been extended to normal states on arbitrary von Neumann algebras \cite{luijk_sufficiency_2026,silva_integral_2026,kossmann_device-independent_2026}, where it recovers Araki's relative entropy \cite{araki_relative1_1976,araki_relative2_1977,kosaki_relative_1986} (see also \cite{ohya_quantum_1993}). Using suitable modular-theoretic analogues, this suggests that sharp mixing bounds in general von Neumann algebras and, thus, entropy-rate bounds in quantum field theory and infinite many-body systems are achievable. Together with Lauritz van Luijk and Henrik Wilming, this will be discussed elsewhere.

\begin{acknowledgments}
Specifically, I would like to thank, on the one hand, Henrik Wilming for pointing out the question concerning the optimal SIM/SIE constant to me, and, on the other hand, Lauritz van Luijk and Henrik Wilming for raising my interest in Frenkel's integral representation of the relative entropy. Moreover, I would like to thank Lauritz van Luijk, Henrik Wilming, Tobias J.~Osborne, and Frank Verstraete for commenting on an early version of this Letter.
\end{acknowledgments}

\paragraph{Assistance by large-language models (LLMs).} 
The proof idea for the main result arose from the author's interest in Frenkel's integral formula, its generalization to arbitrary von Neumann algebras, and its applications.
LLMs were used to check the validity of the proofs and to exhibit the family that proves the optimality of $c=1$. Tobias J.~Osborne provided valuable guidance on LLM verification techniques. All references in the bibliography were inserted manually. LLMs were also used for grammar checking and writing assistance.

\paragraph{On the proof.}
I obtained the proof presented here in August 2026, through various discussions with Henrik Wilming and Lauritz van Luijk in the broader context of von Neumann-algebraic quantum information theory. Henrik Wilming and I have used the problem of the optimal constant in \eqref{eq:tin} several times throughout 2025 and the first half of 2026 as a testing ground for new frontier LLMs. A discussion with and an independent LLM trial by Tobias J.~Osborne on August 19, 2026, after writing up a first draft of the proof, made it clear that the core part of the proof was reproducible by the most current models (with minimal guidance), and one might consider the result common knowledge by then. Nevertheless, I think this hopefully clear presentation of the relatively simple, and in my opinion elegant, proof justifies this manuscript.

\bibliographystyle{apsrev4-2}

\begin{thebibliography}{29}%
\makeatletter
\providecommand \@ifxundefined [1]{%
 \@ifx{#1\undefined}
}%
\providecommand \@ifnum [1]{%
 \ifnum #1\expandafter \@firstoftwo
 \else \expandafter \@secondoftwo
 \fi
}%
\providecommand \@ifx [1]{%
 \ifx #1\expandafter \@firstoftwo
 \else \expandafter \@secondoftwo
 \fi
}%
\providecommand \natexlab [1]{#1}%
\providecommand \enquote  [1]{``#1''}%
\providecommand \bibnamefont  [1]{#1}%
\providecommand \bibfnamefont [1]{#1}%
\providecommand \citenamefont [1]{#1}%
\providecommand \href@noop [0]{\@secondoftwo}%
\providecommand \href [0]{\begingroup \@sanitize@url \@href}%
\providecommand \@href[1]{\@@startlink{#1}\@@href}%
\providecommand \@@href[1]{\endgroup#1\@@endlink}%
\providecommand \@sanitize@url [0]{\catcode `\\12\catcode `\$12\catcode
  `\&12\catcode `\#12\catcode `\^12\catcode `\_12\catcode `\%12\relax}%
\providecommand \@@startlink[1]{}%
\providecommand \@@endlink[0]{}%
\providecommand \url  [0]{\begingroup\@sanitize@url \@url }%
\providecommand \@url [1]{\endgroup\@href {#1}{\urlprefix }}%
\providecommand \urlprefix  [0]{URL }%
\providecommand \Eprint [0]{\href }%
\providecommand \doibase [0]{https://doi.org/}%
\providecommand \selectlanguage [0]{\@gobble}%
\providecommand \bibinfo  [0]{\@secondoftwo}%
\providecommand \bibfield  [0]{\@secondoftwo}%
\providecommand \translation [1]{[#1]}%
\providecommand \BibitemOpen [0]{}%
\providecommand \bibitemStop [0]{}%
\providecommand \bibitemNoStop [0]{.\EOS\space}%
\providecommand \EOS [0]{\spacefactor3000\relax}%
\providecommand \BibitemShut  [1]{\csname bibitem#1\endcsname}%
\let\auto@bib@innerbib\@empty
\bibitem [{\citenamefont {Dür}\ \emph {et~al.}(2001)\citenamefont {Dür},
  \citenamefont {Vidal}, \citenamefont {Cirac}, \citenamefont {Linden},\ and\
  \citenamefont {Popescu}}]{dur_entanglement_2001}%
  \BibitemOpen
  \bibfield  {author} {\bibinfo {author} {\bibfnamefont {W.}~\bibnamefont
  {Dür}}, \bibinfo {author} {\bibfnamefont {G.}~\bibnamefont {Vidal}},
  \bibinfo {author} {\bibfnamefont {J.~I.}\ \bibnamefont {Cirac}}, \bibinfo
  {author} {\bibfnamefont {N.}~\bibnamefont {Linden}},\ and\ \bibinfo {author}
  {\bibfnamefont {S.}~\bibnamefont {Popescu}},\ }\href
  {https://doi.org/10.1103/PhysRevLett.87.137901} {\bibfield  {journal}
  {\bibinfo  {journal} {Physical Review Letters}\ }\textbf {\bibinfo {volume}
  {87}},\ \bibinfo {pages} {137901} (\bibinfo {year} {2001})},\ \Eprint
  {https://arxiv.org/abs/quant-ph/0006034} {quant-ph/0006034} \BibitemShut
  {NoStop}%
\bibitem [{\citenamefont {Bennett}\ \emph {et~al.}(2003)\citenamefont
  {Bennett}, \citenamefont {Harrow}, \citenamefont {Leung},\ and\ \citenamefont
  {Smolin}}]{bennett_capacities_2003}%
  \BibitemOpen
  \bibfield  {author} {\bibinfo {author} {\bibfnamefont {C.}~\bibnamefont
  {Bennett}}, \bibinfo {author} {\bibfnamefont {A.}~\bibnamefont {Harrow}},
  \bibinfo {author} {\bibfnamefont {D.}~\bibnamefont {Leung}},\ and\ \bibinfo
  {author} {\bibfnamefont {J.}~\bibnamefont {Smolin}},\ }\href
  {https://doi.org/10.1109/TIT.2003.814935} {\bibfield  {journal} {\bibinfo
  {journal} {IEEE Transactions on Information Theory}\ }\textbf {\bibinfo
  {volume} {49}},\ \bibinfo {pages} {1895} (\bibinfo {year} {2003})},\ \Eprint
  {https://arxiv.org/abs/quant-ph/0205057} {quant-ph/0205057} \BibitemShut
  {NoStop}%
\bibitem [{\citenamefont {Bravyi}(2007)}]{bravyi_upper_2007}%
  \BibitemOpen
  \bibfield  {author} {\bibinfo {author} {\bibfnamefont {S.}~\bibnamefont
  {Bravyi}},\ }\href {https://doi.org/10.1103/PhysRevA.76.052319} {\bibfield
  {journal} {\bibinfo  {journal} {Physical Review A}\ }\textbf {\bibinfo
  {volume} {76}},\ \bibinfo {pages} {052319} (\bibinfo {year} {2007})},\
  \Eprint {https://arxiv.org/abs/0704.0964} {0704.0964} \BibitemShut {NoStop}%
\bibitem [{\citenamefont {Schlosshauer}(2005)}]{schlosshauer_decoherence_2005}%
  \BibitemOpen
  \bibfield  {author} {\bibinfo {author} {\bibfnamefont {M.}~\bibnamefont
  {Schlosshauer}},\ }\href {https://doi.org/10.1103/RevModPhys.76.1267}
  {\bibfield  {journal} {\bibinfo  {journal} {Reviews of Modern Physics}\
  }\textbf {\bibinfo {volume} {76}},\ \bibinfo {pages} {1267} (\bibinfo {year}
  {2005})},\ \Eprint {https://arxiv.org/abs/quant-ph/0312059}
  {quant-ph/0312059} \BibitemShut {NoStop}%
\bibitem [{\citenamefont {Van~Acoleyen}\ \emph {et~al.}(2013)\citenamefont
  {Van~Acoleyen}, \citenamefont {Mariën},\ and\ \citenamefont
  {Verstraete}}]{van_acoleyen_entanglement_2013}%
  \BibitemOpen
  \bibfield  {author} {\bibinfo {author} {\bibfnamefont {K.}~\bibnamefont
  {Van~Acoleyen}}, \bibinfo {author} {\bibfnamefont {M.}~\bibnamefont
  {Mariën}},\ and\ \bibinfo {author} {\bibfnamefont {F.}~\bibnamefont
  {Verstraete}},\ }\href {https://doi.org/10.1103/PhysRevLett.111.170501}
  {\bibfield  {journal} {\bibinfo  {journal} {Physical Review Letters}\
  }\textbf {\bibinfo {volume} {111}},\ \bibinfo {pages} {170501} (\bibinfo
  {year} {2013})},\ \Eprint {https://arxiv.org/abs/1304.5931} {1304.5931}
  \BibitemShut {NoStop}%
\bibitem [{\citenamefont {Mariën}\ \emph {et~al.}(2016)\citenamefont
  {Mariën}, \citenamefont {Audenaert}, \citenamefont {Van~Acoleyen},\ and\
  \citenamefont {Verstraete}}]{marien_entanglement_2016}%
  \BibitemOpen
  \bibfield  {author} {\bibinfo {author} {\bibfnamefont {M.}~\bibnamefont
  {Mariën}}, \bibinfo {author} {\bibfnamefont {K.~M.~R.}\ \bibnamefont
  {Audenaert}}, \bibinfo {author} {\bibfnamefont {K.}~\bibnamefont
  {Van~Acoleyen}},\ and\ \bibinfo {author} {\bibfnamefont {F.}~\bibnamefont
  {Verstraete}},\ }\href {https://doi.org/10.1007/s00220-016-2709-5} {\bibfield
   {journal} {\bibinfo  {journal} {Communications in Mathematical Physics}\
  }\textbf {\bibinfo {volume} {346}},\ \bibinfo {pages} {35} (\bibinfo {year}
  {2016})},\ \Eprint {https://arxiv.org/abs/1411.0680} {1411.0680} \BibitemShut
  {NoStop}%
\bibitem [{\citenamefont {Lieb}\ and\ \citenamefont
  {Vershynina}(2013)}]{lieb_upper_2013}%
  \BibitemOpen
  \bibfield  {author} {\bibinfo {author} {\bibfnamefont {E.~H.}\ \bibnamefont
  {Lieb}}\ and\ \bibinfo {author} {\bibfnamefont {A.}~\bibnamefont
  {Vershynina}},\ }\href {https://doi.org/10.26421/QIC13.11-12-5} {\bibfield
  {journal} {\bibinfo  {journal} {Quantum Information and Computation}\
  }\textbf {\bibinfo {volume} {13}},\ \bibinfo {pages} {986} (\bibinfo {year}
  {2013})},\ \Eprint {https://arxiv.org/abs/1302.3865} {1302.3865} \BibitemShut
  {NoStop}%
\bibitem [{\citenamefont {Audenaert}(2014)}]{audenaert_quantum_2014}%
  \BibitemOpen
  \bibfield  {author} {\bibinfo {author} {\bibfnamefont {K.~M.~R.}\
  \bibnamefont {Audenaert}},\ }\href {https://doi.org/10.1063/1.4901039}
  {\bibfield  {journal} {\bibinfo  {journal} {Journal of Mathematical Physics}\
  }\textbf {\bibinfo {volume} {55}},\ \bibinfo {pages} {112202} (\bibinfo
  {year} {2014})},\ \Eprint {https://arxiv.org/abs/1304.5935} {1304.5935}
  \BibitemShut {NoStop}%
\bibitem [{\citenamefont {Cheng}\ and\ \citenamefont
  {Liu}(2026)}]{cheng_error_2026}%
  \BibitemOpen
  \bibfield  {author} {\bibinfo {author} {\bibfnamefont {H.-C.}\ \bibnamefont
  {Cheng}}\ and\ \bibinfo {author} {\bibfnamefont {P.-C.}\ \bibnamefont
  {Liu}},\ }\href {https://doi.org/10.48550/arXiv.2507.06232} {\bibinfo {title}
  {Error {Exponents} for {Quantum} {Packing} {Problems} via {An} {Operator}
  {Layer} {Cake} {Theorem}}} (\bibinfo {year} {2026}),\ \Eprint
  {https://arxiv.org/abs/2507.06232} {2507.06232} \BibitemShut {NoStop}%
\bibitem [{\citenamefont {Frenkel}(2023)}]{frenkel_integral_2023}%
  \BibitemOpen
  \bibfield  {author} {\bibinfo {author} {\bibfnamefont {P.~E.}\ \bibnamefont
  {Frenkel}},\ }\href {https://doi.org/10.22331/q-2023-09-07-1102} {\bibfield
  {journal} {\bibinfo  {journal} {Quantum}\ }\textbf {\bibinfo {volume} {7}},\
  \bibinfo {pages} {1102} (\bibinfo {year} {2023})},\ \Eprint
  {https://arxiv.org/abs/2208.12194} {2208.12194} \BibitemShut {NoStop}%
\bibitem [{\citenamefont {Cheng}\ \emph {et~al.}(2025)\citenamefont {Cheng},
  \citenamefont {Gour}, \citenamefont {Lami},\ and\ \citenamefont
  {Liu}}]{cheng_operator_2025}%
  \BibitemOpen
  \bibfield  {author} {\bibinfo {author} {\bibfnamefont {H.-C.}\ \bibnamefont
  {Cheng}}, \bibinfo {author} {\bibfnamefont {G.}~\bibnamefont {Gour}},
  \bibinfo {author} {\bibfnamefont {L.}~\bibnamefont {Lami}},\ and\ \bibinfo
  {author} {\bibfnamefont {P.-C.}\ \bibnamefont {Liu}},\ }\href
  {https://doi.org/10.48550/arXiv.2512.04345} {\bibinfo {title} {The operator
  layer cake theorem is equivalent to {Frenkel}'s integral formula}} (\bibinfo
  {year} {2025}),\ \Eprint {https://arxiv.org/abs/2512.04345} {2512.04345}
  \BibitemShut {NoStop}%
\bibitem [{Note1()}]{Note1}%
  \BibitemOpen
  \bibinfo {note} {$t' = t\protect \tfrac {p}{q}$ with $p = \protect
  \operatorname {Tr}A$ and $q=1-p$.}\BibitemShut {Stop}%
\bibitem [{\citenamefont {Koßmann}\ \emph {et~al.}(2026)\citenamefont
  {Koßmann}, \citenamefont {Schwonnek}, \citenamefont {Liu},\ and\
  \citenamefont {Cheng}}]{kossmann_device-independent_2026}%
  \BibitemOpen
  \bibfield  {author} {\bibinfo {author} {\bibfnamefont {G.}~\bibnamefont
  {Koßmann}}, \bibinfo {author} {\bibfnamefont {R.}~\bibnamefont {Schwonnek}},
  \bibinfo {author} {\bibfnamefont {P.-C.}\ \bibnamefont {Liu}},\ and\ \bibinfo
  {author} {\bibfnamefont {H.-C.}\ \bibnamefont {Cheng}},\ }\href
  {https://doi.org/10.48550/arXiv.2607.03579} {\bibinfo {title}
  {Device-independent {Quantum} {Key} {Distribution} in the commuting operator
  framework}} (\bibinfo {year} {2026}),\ \Eprint
  {https://arxiv.org/abs/2607.03579} {2607.03579} \BibitemShut {NoStop}%
\bibitem [{\citenamefont {Mitrinović}(1970)}]{mitrinovic_analytic_1970}%
  \BibitemOpen
  \bibfield  {author} {\bibinfo {author} {\bibfnamefont {D.~S.}\ \bibnamefont
  {Mitrinović}},\ }\href {https://doi.org/10.1007/978-3-642-99970-3} {\emph
  {\bibinfo {title} {Analytic {Inequalities}}}}\ (\bibinfo  {publisher}
  {Springer Berlin Heidelberg},\ \bibinfo {address} {Berlin, Heidelberg},\
  \bibinfo {year} {1970})\BibitemShut {NoStop}%
\bibitem [{\citenamefont {Pimsner}\ and\ \citenamefont
  {Popa}(1986)}]{pimsner_entropy_1986}%
  \BibitemOpen
  \bibfield  {author} {\bibinfo {author} {\bibfnamefont {M.}~\bibnamefont
  {Pimsner}}\ and\ \bibinfo {author} {\bibfnamefont {S.}~\bibnamefont {Popa}},\
  }\href {https://doi.org/10.24033/asens.1504} {\bibfield  {journal} {\bibinfo
  {journal} {Annales scientifiques de l'École Normale Supérieure}\ }\textbf
  {\bibinfo {volume} {19}},\ \bibinfo {pages} {57} (\bibinfo {year}
  {1986})}\BibitemShut {NoStop}%
\bibitem [{\citenamefont {Audenaert}\ and\ \citenamefont
  {Kittaneh}(2012)}]{audenaert_problems_2012}%
  \BibitemOpen
  \bibfield  {author} {\bibinfo {author} {\bibfnamefont {K.~M.~R.}\
  \bibnamefont {Audenaert}}\ and\ \bibinfo {author} {\bibfnamefont
  {F.}~\bibnamefont {Kittaneh}},\ }\href
  {https://doi.org/10.48550/arXiv.1201.5232} {\bibinfo {title} {Problems and
  {Conjectures} in {Matrix} and {Operator} {Inequalities}}} (\bibinfo {year}
  {2012}),\ \Eprint {https://arxiv.org/abs/1201.5232} {1201.5232} \BibitemShut
  {NoStop}%
\bibitem [{\citenamefont {Vershynina}(2019)}]{vershynina_entanglement_2019}%
  \BibitemOpen
  \bibfield  {author} {\bibinfo {author} {\bibfnamefont {A.}~\bibnamefont
  {Vershynina}},\ }\href {https://doi.org/10.1063/1.5037802} {\bibfield
  {journal} {\bibinfo  {journal} {Journal of Mathematical Physics}\ }\textbf
  {\bibinfo {volume} {60}},\ \bibinfo {pages} {022201} (\bibinfo {year}
  {2019})},\ \Eprint {https://arxiv.org/abs/1803.07117} {1803.07117}
  \BibitemShut {NoStop}%
\bibitem [{\citenamefont {Luijk}\ and\ \citenamefont
  {Wilming}(2026)}]{luijk_sufficiency_2026}%
  \BibitemOpen
  \bibfield  {author} {\bibinfo {author} {\bibfnamefont {L.~v.}\ \bibnamefont
  {Luijk}}\ and\ \bibinfo {author} {\bibfnamefont {H.}~\bibnamefont
  {Wilming}},\ }\href {https://doi.org/10.48550/arXiv.2604.08380} {\bibinfo
  {title} {Sufficiency and {Petz} recovery for positive maps}} (\bibinfo {year}
  {2026}),\ \Eprint {https://arxiv.org/abs/2604.08380} {2604.08380}
  \BibitemShut {NoStop}%
\bibitem [{\citenamefont {Silva}\ \emph {et~al.}(2026)\citenamefont {Silva},
  \citenamefont {Fröb}, \citenamefont {Lechner},\ and\ \citenamefont
  {Sangaletti}}]{silva_integral_2026}%
  \BibitemOpen
  \bibfield  {author} {\bibinfo {author} {\bibfnamefont {R.~C.~d.}\
  \bibnamefont {Silva}}, \bibinfo {author} {\bibfnamefont {M.~B.}\ \bibnamefont
  {Fröb}}, \bibinfo {author} {\bibfnamefont {G.}~\bibnamefont {Lechner}},\
  and\ \bibinfo {author} {\bibfnamefont {L.}~\bibnamefont {Sangaletti}},\
  }\href {https://doi.org/10.48550/arXiv.2607.05195} {\bibinfo {title}
  {Integral representations of $f$-divergences for general von {Neumann}
  algebras}} (\bibinfo {year} {2026}),\ \Eprint
  {https://arxiv.org/abs/2607.05195} {2607.05195} \BibitemShut {NoStop}%
\bibitem [{\citenamefont {Araki}(1976)}]{araki_relative1_1976}%
  \BibitemOpen
  \bibfield  {author} {\bibinfo {author} {\bibfnamefont {H.}~\bibnamefont
  {Araki}},\ }\href {https://doi.org/10.2977/prims/1195191148} {\bibfield
  {journal} {\bibinfo  {journal} {Publications of the Research Institute for
  Mathematical Sciences}\ }\textbf {\bibinfo {volume} {11}},\ \bibinfo {pages}
  {809} (\bibinfo {year} {1976})}\BibitemShut {NoStop}%
\bibitem [{\citenamefont {Araki}(1977)}]{araki_relative2_1977}%
  \BibitemOpen
  \bibfield  {author} {\bibinfo {author} {\bibfnamefont {H.}~\bibnamefont
  {Araki}},\ }\href {https://doi.org/10.2977/prims/1195190105} {\bibfield
  {journal} {\bibinfo  {journal} {Publications of the Research Institute for
  Mathematical Sciences}\ }\textbf {\bibinfo {volume} {13}},\ \bibinfo {pages}
  {173} (\bibinfo {year} {1977})}\BibitemShut {NoStop}%
\bibitem [{\citenamefont {Kosaki}(1986)}]{kosaki_relative_1986}%
  \BibitemOpen
  \bibfield  {author} {\bibinfo {author} {\bibfnamefont {H.}~\bibnamefont
  {Kosaki}},\ }\href {https://www.jstor.org/stable/24714805} {\bibfield
  {journal} {\bibinfo  {journal} {Journal of Operator Theory}\ }\textbf
  {\bibinfo {volume} {16}},\ \bibinfo {pages} {335} (\bibinfo {year}
  {1986})}\BibitemShut {NoStop}%
\bibitem [{\citenamefont {Ohya}\ and\ \citenamefont
  {Petz}(1993)}]{ohya_quantum_1993}%
  \BibitemOpen
  \bibfield  {author} {\bibinfo {author} {\bibfnamefont {M.}~\bibnamefont
  {Ohya}}\ and\ \bibinfo {author} {\bibfnamefont {D.}~\bibnamefont {Petz}},\
  }\href {https://link.springer.com/book/9783540208068} {\emph {\bibinfo
  {title} {Quantum {Entropy} and {Its} {Use}}}},\ Theoretical and
  {Mathematical} {Physics}\ (\bibinfo  {publisher} {Springer Berlin,
  Heidelberg},\ \bibinfo {year} {1993})\BibitemShut {NoStop}%
\bibitem [{\citenamefont {Lieb}(1973)}]{lieb_convex_1973}%
  \BibitemOpen
  \bibfield  {author} {\bibinfo {author} {\bibfnamefont {E.~H.}\ \bibnamefont
  {Lieb}},\ }\href {https://doi.org/10.1016/0001-8708(73)90011-X} {\bibfield
  {journal} {\bibinfo  {journal} {Advances in Mathematics}\ }\textbf {\bibinfo
  {volume} {11}},\ \bibinfo {pages} {267} (\bibinfo {year} {1973})}\BibitemShut
  {NoStop}%
\bibitem [{\citenamefont {Robertson}(1929)}]{robertson_uncertainty_1929}%
  \BibitemOpen
  \bibfield  {author} {\bibinfo {author} {\bibfnamefont {H.~P.}\ \bibnamefont
  {Robertson}},\ }\href {https://doi.org/10.1103/PhysRev.34.163} {\bibfield
  {journal} {\bibinfo  {journal} {Physical Review}\ }\textbf {\bibinfo {volume}
  {34}},\ \bibinfo {pages} {163} (\bibinfo {year} {1929})}\BibitemShut
  {NoStop}%
\bibitem [{Note2()}]{Note2}%
  \BibitemOpen
  \bibinfo {note} {The inclusions are understood as extending a compact
  operator on a closed subspace by zero on the complement.}\BibitemShut {Stop}%
\bibitem [{\citenamefont {van Luijk}\ \emph {et~al.}(2024)\citenamefont {van
  Luijk}, \citenamefont {Stottmeister},\ and\ \citenamefont
  {Werner}}]{van_luijk_convergence_2024}%
  \BibitemOpen
  \bibfield  {author} {\bibinfo {author} {\bibfnamefont {L.}~\bibnamefont {van
  Luijk}}, \bibinfo {author} {\bibfnamefont {A.}~\bibnamefont {Stottmeister}},\
  and\ \bibinfo {author} {\bibfnamefont {R.~F.}\ \bibnamefont {Werner}},\
  }\bibfield  {journal} {\bibinfo  {journal} {Annales Henri Poincaré}\ }\href
  {https://doi.org/10.1007/s00023-024-01413-6} {10.1007/s00023-024-01413-6}
  (\bibinfo {year} {2024}),\ \Eprint {https://arxiv.org/abs/2306.16063}
  {2306.16063} \BibitemShut {NoStop}%
\bibitem [{\citenamefont {Blackadar}(2006)}]{blackadar_operator_2006}%
  \BibitemOpen
  \bibfield  {author} {\bibinfo {author} {\bibfnamefont {B.}~\bibnamefont
  {Blackadar}},\ }\href {https://doi.org/10.1007/3-540-28517-2} {\emph
  {\bibinfo {title} {Operator {Algebras}}}},\ edited by\ \bibinfo {editor}
  {\bibfnamefont {J.}~\bibnamefont {Cuntz}}\ and\ \bibinfo {editor}
  {\bibfnamefont {V.}~\bibnamefont {F.R.~Jones}},\ \bibinfo {series}
  {Encyclopaedia of {Mathematical} {Sciences}}, Vol.\ \bibinfo {volume} {122}\
  (\bibinfo  {publisher} {Springer},\ \bibinfo {address} {Berlin, Heidelberg},\
  \bibinfo {year} {2006})\BibitemShut {NoStop}%
\bibitem [{\citenamefont {Reed}\ and\ \citenamefont
  {Simon}(1980)}]{reed_functional_1980}%
  \BibitemOpen
  \bibfield  {author} {\bibinfo {author} {\bibfnamefont {M.}~\bibnamefont
  {Reed}}\ and\ \bibinfo {author} {\bibfnamefont {B.}~\bibnamefont {Simon}},\
  }\href@noop {} {\emph {\bibinfo {title} {Functional analysis}}},\ \bibinfo
  {series} {Methods of modern mathematical physics}\ No.~\bibinfo {number} {1}\
  (\bibinfo  {publisher} {Academic Press},\ \bibinfo {address} {San Diego,
  California},\ \bibinfo {year} {1980})\BibitemShut {NoStop}%
\end{thebibliography}

\clearpage

\onecolumngrid
\bigskip
\section{Appendix}
\twocolumngrid

\paragraph{Justification of Eqs.~\eqref{eq:log_com}, \eqref{eq:log_int} \& \eqref{eq:log_rep}.}

Following \cite[Eq.~(3.6)]{lieb_convex_1973}, the directional derivative of the operator logarithm on the left-hand side of \eqref{eq:kernel} can be written as the norm-convergent integral
\begin{align}\label{eq:kernel_alt}
  \Dlog[q\sigma\!+\!s\rho](\rho)& = \!\!\int_{0}^{\infty}\!\!\!dr\,\frac{1}{q\sigma\!+\!s\rho\!+\!r\1}\rho\frac{1}{q\sigma\!+\!s\rho\!+\!r\1}.
\end{align}
For every $s\in[0,p]$ and $r\geq0$, it follows that $q\sigma+s\rho+r\1\geq(m+r)\1$, where $m=q\lambda_{\min}(\sigma)>0$. Thus, using the representation \eqref{eq:kernel_alt} for $\Dlog[q\sigma+s\rho](\rho)$, results in an integrand that is norm continuous in $s\in[0,p]$ and is dominated by $\norm{\rho}(m+r)^{-2}$. Therefore, \eqref{eq:kernel} is norm continuous in $s$, and the expression \eqref{eq:log_int} is justified.

This results in
\begin{align}
  \log B-\log(q\sigma) & = \int_{0}^{p} ds\, \int_{0}^{\infty} dt\,\tfrac{q}{(q+st)^2}\,P_t.
\end{align}
Switching the integration order is justified because $(s,t)\mapsto q(q+st)^{-2}P_t$ is Borel measurable, bounded by $1/q$, and supported in the compact rectangle $[0,p]\times[0,R]$ with $R=\norm{\sigma^{-1/2}\rho\sigma^{-1/2}}$ ($\rho\leq R\sigma$ gives $P_t=0$ for $t\geq R$). Thus, each matrix entry is absolutely integrable, and Fubini's theorem applies entrywise, yielding \eqref{eq:log_rep}.

Finally, the map $T\mapsto[\tau,T]$ from $\Bcal(\Hcal)$ to $\Sone(\Hcal)$ is bounded, $\tno{[\tau,T]}\leq2\tno{\tau}\norm{T}$, and, thus, commutes with the integral in \eqref{eq:log_rep}, which yields the desired expression \eqref{eq:log_com}.

\paragraph{The proof of Lemma \ref{lem:proj}.}

For positive $\tau\in\Sone(\Hcal)$ with $\Tr\tau\!=\!1$,
\begin{align}
\label{eq:tau_ip}
\langle T,S\rangle_{\tau} & = \Tr(\tau T^{*}S), & T,S & \in\Bcal(\Hcal),
\end{align}
defines a positive sesquilinear form on $\Bcal(\Hcal)$. For bounded $T=T^{*}$, it follows that (by a variation of Robertson's uncertainty principle \cite{robertson_uncertainty_1929})
\begin{align}
\label{eq:tno_com_bound}
\tno{[\tau,T]} & \!=\! \sup_{-\1\leq S \leq\1}\!\!|\Tr([\tau,T]S)| \!=\! \sup_{-\1\leq S \leq\1}\!\!|\Tr(\tau[T,S])| \notag \\
& \!=\! 2\sup_{-\1\leq S \leq\1}\!\!|\Im\langle T,S\rangle_{\tau}| \!\leq\! 2\langle T^{2}\rangle_{\tau}^{\frac{1}{2}}\sup_{-\1\leq S \leq\1}\!\!\langle S^{2}\rangle_{\tau}^{\frac{1}{2}} \notag \\
& \!\leq\! 2\langle T^{2}\rangle_{\tau}^{\frac{1}{2}},
\end{align}
using the variational characterization of the trace norm for skew-adjoint operators, the Cauchy--Schwarz inequality for \eqref{eq:tau_ip}, and $\langle S^{2}\rangle_{\tau}^{\frac{1}{2}}\!\leq\!\norm{S}$. Replacing $T$ by $T\!-\!\langle T\rangle_{\tau}$ implies the first inequality of \eqref{eq:trace_bound}.

Assuming in addition that $T$ is positive, the final inequality in \eqref{eq:trace_bound} is implied by the operator inequality $T\leq\norm{T}$ and the fact that $S\mapsto T^{\frac{1}{2}}ST^{\frac{1}{2}}$ is completely positive:
\begin{align}
\langle T^{2}\rangle_{\tau}\!-\!\langle T\rangle_{\tau}^{2} & \!\leq\! \norm{T}\langle T\rangle_{\tau}\!-\!\langle T\rangle_{\tau}^{2} \notag \\
& \!=\! \tfrac{1}{4}\norm{T}^{2}\!-\!(\tfrac{1}{2}\norm{T}\!-\!\langle T\rangle_{\tau})^2 \notag \\
& \!\leq\! \tfrac{1}{4}\norm{T}^{2}.
\end{align}
This implies the second inequality in \eqref{eq:trace_bound}.

\paragraph{The proof of Theorem~\ref{thm:main} for separable Hilbert spaces.}

Now, let $\dim\Hcal_{0}=\infty$, and let $(e_{j})_{j\geq1}$ be an orthonormal basis of eigenvectors of $B$,
\begin{align}\label{eq:bev}
Be_{j} & =b_{j}\, e_{j},
\end{align}
with $b_{j}>0$, $\sum_{j}b_{j}=1$. Let $P_{n}$ be the projection onto the closed subspace $\Hcal_{n}\subseteq\Hcal_{0}$ spanned by $\{e_{1},...,e_{n}\}$ and set $A_{n}=P_{n}AP_{n}$, $B_{n}=P_{n}BP_{n}$, $\alpha_{n}=\Tr A_{n}\uparrow p$, $\beta_{n}=\Tr B_{n}\uparrow1$. Applying the finite-dimensional case to the normalized pair $(\tfrac{1}{\beta_{n}}A_n,\tfrac{1}{\beta_{n}}B_n)$ on $\Hcal_{n}$ gives, for $D_{n}:=[A_{n},\log B_{n}]$,
\begin{align}\label{eq:Dnbound}
  \tno{D_{n}} & \leq\beta_{n}\,\hbin(\tfrac{\alpha_{n}}{\beta_{n}})\to\hbin(p).
\end{align}
Clearly, $P_{m}P_{n}=P_{n}P_{m}$ as well as $P_{m}\log B_{n}P_{m} = \log B_{m}$ for $m\leq n$ and, thus, $P_{m}D_{n}P_{m}=D_{m}$. Now, $\varphi_{m}(K_{m}):=\Tr(K_{m}D_m)$ with $K_{m}\in\Kcal(\Hcal_{m})$ defines a consistent sequence of linear functionals on the nested sequence $...\subseteq\Kcal(\Hcal_{n})\subseteq\Kcal(\Hcal_{n+1})\subseteq...\subseteq\Kcal(\Hcal_{0})$.\footnote{The inclusions are understood as extending a compact operator on a closed subspace by zero on the complement.} Since any element $K\in\Kcal(\Hcal_{0})$ can be norm-approximated by its compressions $K_{n}=P_{n}KP_{n}$, it follows that $|\varphi_{n}(K_{n})|\leq\tno{D_{n}}\norm{K_{n}}\leq\beta_{n}\,\hbin(\tfrac{\alpha_{n}}{\beta_{n}})\norm{K_{n}}\to\hbin(p)\norm{K}$ by \eqref{eq:Dnbound} and, thus, defines a unique bounded linear functional $\varphi$ on $\Kcal(\Hcal_{0})$ \cite{van_luijk_convergence_2024}.
By the duality between compact and trace-class operators, $\Kcal(\Hcal_{0})^*=\Sone(\Hcal_{0})$ \cite[Thm.~I.8.6.1]{blackadar_operator_2006}, $\varphi$ has a unique representative $D\in\Sone(\Hcal_{0})$ with $\tno{D}\leq\hbin(p)$ and $P_{n}DP_{n}=D_{n}$. Again, since compressions of a fixed trace-class operator converge in trace norm, it follows that $\tno{D_{n}-D}\to0$. Finally, the matrix elements
\begin{align}
  \langle e_{i},De_{j}\rangle & =(\log b_{j}-\log b_{i})\,\langle e_{i},Ae_{j}\rangle
\end{align}
show that $D$ represents the form $\mathfrak c$ in \eqref{eq:weakform} on $\mathcal D_{0} = \textup{span}\{e_{j}\}$, which is a core for $\log B$ in the graph norm. The identity extends to all of $\operatorname{Dom}(\log B)$ by graph continuity, and density of $\mathcal D_{0}$ in $\Hcal_{0}$ makes the bounded representative unique \cite[Chp.~VIII]{reed_functional_1980}. This completes the proof of Theorem~\ref{thm:main}.

\end{document}